\documentclass[a4paper, 11pt]{article}
\usepackage{amsmath, amsfonts, amsthm, mathrsfs, graphicx, cite, latexsym}

\title{Ecosystem Trust Profiles}
 
\author{Christoph F. Strnadl\footnote{Gaia-X European Association for Data and Cloud, Brussels, Belgium; \texttt{christoph.strnadl@gaia-x.eu}; ORCID: 0000-0003-4173-656X}}

\date{24 March 2026}

\begin{document}

\maketitle

\begin{abstract}

We define a method how digital ecosystems (including data spaces) may autonomously define and ``advertise'' credentials they issue or they trust in the form of so-called \emph{ecosystem trust profiles}. An ecosystem trust profile collects all (verifiable) credentials and issuers sorted by trust scope accepted (``trusted'') by a particular ecosystem.

We then show how a minimal trust relation between ecosystems may be defined using ecosystem trust frameworks of different ecosystems and explore a few of its properties.

A first application of the theory is given for a use case in the manufacturing realm where different international ecosystems need to agree on certain credentials for various scopes of trust such as identity, service compliance, and other conformance standards. 

We implement this requirement by identifying and discussing two different definitions of \emph{credential equivalence} for a given trust scope, one requiring additional cross-ecosystem governance or coordination, one not. The second approach demonstrates how to solve the so-called cross-ecosystem trust dilemma, that is, the problem how ecosystems can establish cross-ecosystem trust while, at the same time, allowing them to fully retain their sovereignty. A  \emph{fragility} theorem demonstrates that this sovereignty leads trust to be unstable without any additional coordination or governance mechanisms on top of (and outside to) ecosystem trust profiles.

We extend our method to data spaces in particular and propose a novel rigorous definition of cross-data space interoperability. 
This allows us to prove the proposition that the extent of interoperability between two data spaces is exactly determined by the amount of commonality in their respective ecosystem trust profiles.

\end{abstract}

\noindent
\textbf{Keywords:} ecosystems, trust, trust frameworks, data spaces, Gaia-X	
\newline

\pagebreak

\tableofcontents

\section{Introduction}

\subsection{Trust}

Trust is a multi-faceted and ramified interdisciplinary topic hard to capture or define and even more difficult to formalize or automate \cite{Castelfranchi-2010, Paliszkiewicz-2022b}. Despite these methodological-conceptual challenges, the importance of trust and the principle way of its operations in an economic context were recognized as early as 1974 when Nobel Laureate Kenneth J. Arrow formulated \cite{Arrow-1974}:

\begin{quote} 
``Trust is an important lubricant of a social system. 
It is extremely efficient; it saves a lot of trouble to have a fair degree of reliance on other peoples’ word. […] Trust and similar values, loyalty or truth-telling […] increase the efficiency of the system, enable you to produce more goods […]. But they are not commodities for which trade on the open market is technically possible or even meaningful.''
\end{quote}

This overall observation was subsequently corroborated for e-commerce \cite{Gefen-2003}, digital business \cite{Paliszkiewicz-2022a}, and digital ecosystems \cite{Ion-2008, Cioroaica-2019, Cioroaica-2020}.

As we want to conceptualize (federated) trust in the specific setting of digital ecosystems, let us first further define what we understand as ``digital ecosystem'' before identifying the appropriate formalization of trust.

For our purposes, we combine the definitions of Adner \cite{Adner-2017} and Jacobides \cite{Jacobides-2018} and define an \emph{ecosystem} as follows (see also \cite{Strnadl-2024} for a rigorous and formal treatment):

\begin{quote}
An ecosystem is the ``not fully hierarchical alignment structure'' of a set
of partners (actors) ``with varying degrees of multilateral, non-generic complementarities\footnote{This property is needed to differentiate ecosystems for our purposes from two- or multi-sided platforms, where platform participants typically do not collaborate in any complementary way with their goods and services.} that need to interact in order for a focal value proposition
to materialize.''
\end{quote}

A \emph{data ecosystem} or \emph{digital ecosystem} can then be understood as the subset of all ecosystems where the focal value proposition is realized by sharing data or by employing digital or digitized services, respectively \cite{Strnadl-2023a}.

Our focus on ecosystem participants implies that (logical) formalizations of trust that try to capture the semantics of trust by employing modal (logic) operators such as $T(\phi)$, with the meaning of ``formula $\phi$ is trusted'' \cite{Tagliaferri-2019, Tagliaferri-2022}, or invoking belief functions $B(\phi)$ \cite{Ma-2006, Ma-2008}\footnote{with the intended meaning of ``trust is the outcome of observations leading to the belief that the actions of another may be relied upon''\cite{Ma-2008}} are insufficient. Instead, we will rely on the common and widely accepted \emph{relational} definition of trust or the closely related trustworthiness. 

In this approach, trust is defined as a relation between one party, typically called the \emph{trustor}, which relies on the beneficial (positive) behaviour of another party, the \emph{trustee}, in a situation which bears risk for the trustor. \cite{Huang-2006, Ferdous-2010, Cerutti-2013, Guven-2017, Carter-2023}.

\subsection{Cross-Ecosystem Trust Dilemma}

Ecosystems --- typically through an \emph{ecosystem governance authority} --- give themselves and their participants a set of policies and rules participants have to follow \cite{Gaia-X-2025a}. These rules and their automation in the form of suitable software components constitute an \emph{ecosystem's trust framework}. We need to stress that ecosystems, by design are completely autonomous in their choices regarding the trust framework and only depend on the set of actors and the particular ``alignment structure''.

Given this sovereignty over their own trust frameworks, the question then arises how ecosystems may trust each other when, in fact, no supra-ecosystem authority can force ecosystems to trust one another. We have called this the \emph{cross-ecosystem trust dilemma} \cite{Strnadl-2026b}:

\begin{quote}
\textbf{Cross-ecosystem trust dilemma}. How can digital ecosystems, which are fully autonomous in their choices regarding the definition and enforcement of their respective trust frameworks, establish mutual trust on behalf of their participants without any explicit bilateral or overarching governance (\textbf{independence condition}).
\end{quote}

The independence condition prohibits ``the existence of any supra-ecosystem authority able to somehow “force” or “orchestrate” ecosystems to trust one another.'' It also rules out ``the possibility that two or more ecosystems negotiate and explicitly (e.g., through contracts) agree on a common trust framework because such an approach will not scale'' \cite{Strnadl-2026b}.

This contribution solves the problem by introducing a (voluntary) disclosure mechanism based on a common minimal semantic. Otherwise, no further constraints are based on the interrelations of the ecosystems with each other. 

However, this lack of additional coordination or governance mechanisms between (or on top) of a set of interacting ecosystems severely restrict the properties of any suitable \emph{trust relation} between them. As we will rigorously show in section \ref{sec-fragility-theorem}, this implies that, in general, trust relations between ecosystems are not stable over time.

\section{Ecosystem Trust Profiles}			\label{sec-theory}

%\pagevalues

\newtheorem{definition}{Definition}
\newtheorem{proposition}{Proposition}	
\newtheorem{lemma}{Lemma}
\newtheorem{corollary}{Corollary}

\newcommand{\mE}{\mathscr{E}}
\newcommand{\mS}{\mathscr{S}}
\newcommand{\mP}{\mathscr{P}}
\newcommand{\mC}{\mathscr{C}}
\newcommand{\mT}{\mathscr{T}}
\newcommand{\mO}{\mathscr{O}}					%% realm of organisations
\newcommand{\mD}{\mathscr{D}}					%% realm of data spaces

The trust framework of an ecosystem, \emph{inter alia}, specifies so-called \emph{trust service providers} which issue (verifiable) credentials attesting certain properties for ecosystem-relevant entities. Typical examples are participant credentials for attesting membership of an entity in the particular ecosystem, or service credentials attesting the compliance of a service offered by one of the ecosystem members with some pre-defined criteria (e.g., a \emph{compliance document} such as the Gaia-X Compliance Document \cite{Gaia-X-2025b}.

We formalize this as follows. Let $\mP$ be a set of trust service providers $p \in \mP$ issuing (verifiable) credentials $c \in \mC$. We further assume that credentials are sorted by a (trust) scope $s \in \mS$. From a model-theoretic perspective, having trust \emph{scopes} is not necessary at all. They serve as an aggregation or characterization of individual credentials. However, in section \ref{sec-application} we will demonstrate how this model element may be used to construct an elegant formulation of \emph{credential equivalence}.

Simplifying the construct of a ``trust statement'' \cite{Guven-2017}, we then formalize those portions of an ecosystem's trust framework relevant for our purposes as a set of so-called \emph{trust propositions}.  

In \cite{Guven-2017}, a trust statement is a 4-tuple $\langle T, \wp, c, \delta \rangle$ with $T$ the trustee, $\wp$ a predicate, $c$ the context (e.g., conditions, assumptions) within which the trust statement is valid, and $\delta$ the time period of its validity. Comparing this with our informal example of above, we find that we have to equate $T$ with our trust service provider(s), $\wp$ with the set of (verifiable) credentials issued by the particular trust service providers, and $c$ with the trust scope. At this point of the formalization, we will not need a validity time $\delta$.

This allows us to write the trust proposition "Trust service provider $p$ issues a credential $c$ for scope $s$" in the form of the triple $t = (s,p,c) \in \mT \equiv \mS \times \mP \times \mC$. An \emph{Ecosystem Trust Profile} corresponds then to the list of all trust propositions accepted by a particular ecosystem, that is, where the respective credentials issued by the identified issuers are accepted by all ecosystem participants.

\begin{definition}[Ecosystem trust profile] 	\label{def-eco-trust-profile}
Let $P \subseteq \mP$ be a set of trust service providers issuing credentials $c \in \mC$ for a scope $s \in \mS$ and let $T \subseteq \mT = \mS \times \mP \times \mC$ be a set of trust propositions. Then the structure $E = \langle P, T \rangle$ is called an \emph{ecosystem trust profile}.

We further fix the following abbreviations:
\begin{enumerate}

\item The set $S = S(E) = \{ s \; | \; \exists \, (s,p,c) \in T \}$ is called the \emph{(trust) scopes} accepted by $E$.
\item The set $C = C(E) = \{ c \; | \; \exists \, (s,p,c) \in T \}$ is called credentials accepted by $E$. Because $S$ and $C$ are uniquely defined for a fixed $E$, we will write $s \in E$ or $c \in E$ when, in fact, $s \in S(E)$ or $c \in C(E)$.
\item The set $P$ is called the set of \emph{domestic} trust service providers of $E$.
\item All non-domestic trust service providers $p' \in \mP \setminus P$ are called \emph{foreign} trust service providers for $E$.

\end{enumerate}

In order to simplify our language we will often equate the term "\emph{ecosystem} $E$" with "an ecosystem characterized by  \emph{ecosystem trust profile} $E$".
\end{definition}

The intended meaning of $P$ is to denote those trust service providers which are under the governance of a particular ecosystem or otherwise specifically related to an ecosystem. In that sense, a provider $p \in P$ may be thought of issuing credentials on behalf of the ecosystem $E$. Ecosystem participants (which are not modelled directly in this approach) are always assumed to trust credentials issued by domestic trust service providers\footnote{It is actually this small \emph{observation} which allows us to have a flat list of trust propositions.}. 

All other trust service providers $p' \in \mP$ with $p' \notin P$ are called \emph{foreign} trust service providers (\emph{scil.} for ecosystem $E$). A triple $(s,p',c)$ may then be understood as the assertion that ecosystem $E$ accepts credentials $c$ for trust scope $s$ issued by a foreign trust service provider $p'$ \emph{not belonging} to or \emph{under the control} of $E$. Often, the \emph{foreign} trust service provider $p'$ belongs to a different ecosystem $E'$, but that does not always need to be the case and $p'$ may not directly belong to any ecosystem of our universe.

We will now use the concept of foreign trust service providers to define the fundamental trust relationship between ecosystems.

\begin{definition}[Foreign ecosystem trust relation]	\label{def-eco-trust-relation}
Let $\mE$ be a set of ecosystem trust profiles and let $E_i = \langle P_i, T_i \rangle \in \mE$ 
and $E_j = \langle P_j, T_j \rangle \in \mE$. Then the (ecosystem trust) relation "ecosystem $E_i$ trusts a (foreign) ecosystem $E_j$ regarding trust scope $s$", written as $E_i \sqsubset_s E_j \in \mE \times \mE$, is defined as follows:
\begin{equation}						\label{eq-eco-trust-relation}
	E_i \sqsubset_s E_j \; \Longleftrightarrow \; \exists \, c \ 
	         \exists \, p_j  \in P_j \setminus P_i: 
		(s,p_j,c) \in T_i \cap T_j.
\end{equation}
\end{definition}

The key point here is that ecosystem $E_i$ trusts a \emph{foreign} trust service provider $p_j$ of ecosystem $E_j$ which issues a credential $c$.\newline

\textbf{Example.} We use the \emph{trust scope} to model constructs where ecosystems would agree on a kind of weak common semantics of the credentials like for a scope ``identity'' or ``participant'' property. The Canadian ecosystem would have a Canadian company number credential for scope ``Identity'' and Factory-X would have a Catena-X credential for the same scope — and they would mutually recognize the other ecosystem's credential for that scope.

\begin{lemma}[Irreflectivity]
The relation $\sqsubset_s$ is irreflexive, that is, $E \not\sqsubset_s E$ for all $E = \langle P, T \rangle \in \mE$ for all trust scopes $s \in S$.
\end{lemma}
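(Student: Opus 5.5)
The plan is to argue directly from Definition~\ref{def-eco-trust-relation} by specializing it to $E_i = E_j = E$. Fix an arbitrary ecosystem trust profile $E = \langle P, T \rangle \in \mE$ and an arbitrary scope $s$. I will show that the right-hand side of \eqref{eq-eco-trust-relation} cannot hold, so that $E \sqsubset_s E$ fails.

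Substituting $P_i = P_j = P$ and $T_i = T_j = T$ into \eqref{eq-eco-trust-relation} turns the defining condition into
\begin{equation*}
\exists \, c \ \exists \, p \in P \setminus P : (s,p,c) \in T \cap T .
\end{equation*}
The key observation is that $P \setminus P = \emptyset$. The existential quantifier over $p$ therefore ranges over the empty set, and the condition is false no matter what $T$, $c$ or $s$ are. By the biconditional in Definition~\ref{def-eco-trust-relation}, it follows that $E \not\sqsubset_s E$. Since $E$ and $s$ were arbitrary, this gives irreflexivity for every $E \in \mE$ and every scope $s$. In particular it holds for every $s \in S(E)$, and also for scopes not accepted by $E$, so the quantification "for all $s \in S$" in the statement needs no special treatment.

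There is no genuine obstacle, since the argument is purely set-theoretic. The only step needing care is to read the definition correctly. The requirement $p_j \in P_j \setminus P_i$, i.e.\ that the trusted provider be foreign to the trustor, is exactly what forces irreflexivity. The membership $(s,p,c) \in T$ plays no role, so a profile that accepts its own domestic credentials still does not "trust itself" in the sense of $\sqsubset_s$. I would remark on this after the proof, to emphasize that the relation is intended to capture cross-ecosystem trust only.
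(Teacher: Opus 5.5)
Your proof is correct and is essentially the paper's argument: setting $E_i = E_j = E$ in the definition of $\sqsubset_s$ forces the witness provider to lie in $P \setminus P = \emptyset$, so the existential condition fails for every scope $s$. Your extra remarks, that the conclusion also covers scopes outside $S(E)$ and that the membership condition on $T$ plays no role, are accurate.
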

\begin{proof}
For symmetric arguments to $\sqsubset_s$, the right-hand side of eqn.~(\ref{eq-eco-trust-relation}) requires to find a "foreign" $p \in P \setminus P$.
As this is impossible because we only have \emph{domestic} trust service providers, $P \setminus P = \emptyset$, and
the relation fails\footnote{Kudos to Christopher Grillo from Cybernetica for this.}. This is also the reason for calling the relation ``foreign'' ecosystem trust relation.
\end{proof}

Let's look a little bit deeper in the set of \emph{trust propositions} $T_i$ and $T_j$ for two ecosystems $E_i = \langle P_i, T_i \rangle$ and $E_j = \langle P_j, T_j \rangle$. Assume that the intersection of the sets of trust propositions of both ecosystems is non-empty, i.e., $T_i \cap T_j = \{t,... \}  \neq \emptyset$ for at least one trust proposition $t = (s,p,c) \in T_i \cap T_j$. Then we can discriminate four cases for the trust service provider $p$\footnote{We use the logical signs $\top$ and $\bot$ to denote logical truth and falsehood. We also assume, without any loss of generality, that all providers $p$ belong to at least one ecosystem $E \in \mE$}.

\begin{center}
\begin{tabular}{cccl}
\hline\hline
 & $p \in P_i$ & $p \in P_j$ & Consequences \\	
 \hline
 1 & $\bot$ & $\bot$ & 
 			$\exists E^\star \in \mE$ with $E_i \sqsubset_s E^\star$ and $E_j \sqsubset_s E^\star$. \\
 2 & $\bot$ & $\top$ & $E_i \sqsubset_s E_j$. \\ 
 3 & $\top$ & $\bot$ & $E_j \sqsubset_s E_i$. \\
 4 & $\top$ & $\top$ & $E_i \sqsubset_s E_j$ and $E_j \sqsubset_s E_i$ \\
\hline\hline
\end{tabular}
\end{center}

Obviously, only cases (1) and (4) are revealing.
\begin{itemize}
\item Case (1) indicates the situation where two ecosystems both depend on the same certificate issues by provider in another ecosystem. This is a situation, for instance, for identity credentials where ecosystems typically depend on another ``ecosystem'' of suitable identity providers or issuers, e.g., eIDAS.
\item Case (4) is an even stronger than the two symmetric trust relations seem to indicate because both ecosystems actually share a common trust service providers and both accept the same credential (for the same scope). The situation of ``mutual trust'' encoded in $E_i \sqsubset_s E_j \land E_j \sqsubset_s E_i$ is weaker as it may already hold in case two quite distinct trust propositions $(s,p_j,c_1) \in T_i$ and $(s,p_i,c_2)\in T_j$ hold with $p_j \in P_j$ and $p_i \in P_i$.
\end{itemize}

We can use case (4) to define a much stronger trust relationship \emph{direct mutual trust} as follows.

\begin{definition}[Direct mutual trust]
Let $E_i = \langle P_i, T_i \rangle$ and $E_j = \langle P_j, T_j \rangle$ be two ecosystem. Then the relation $\Box \subseteq \mE \times \mE$, \emph{direct mutual trust} between ecosystems $E_i$ and $E_j$, $E_i \; \Box \; E_j$, is defined as follows.
\begin{equation}
	E_i \; \Box \; E_j \Leftrightarrow T_i \cap T_j \neq \emptyset.
\end{equation}
\end{definition}

Note that $\Box$ still is not transitive in general because $E_i \; \Box \; E_j$ may be true because of a trust provider $p$ with trust proposition $(s,p,c)$ and $E_j \; \Box \; E_k$ may hold because of another trust provider $p^\star$ for trust proposition $(s^\star,p^\star, c^\star)$.

We use the characterization of trust propositions by a kind $s$ to introduce the notion of a \emph{trust realms} as follows.

\begin{definition}[Trust realm]			\label{def-trust-realm}
Let $\mE$ be a set of ecosystem trust profiles and write $E = \langle P, T \rangle \in \mE$ for an arbitrary element of this set. Then for a $s \in S(E)$ the (domain of the function) $\rho = \rho(s)$ is called the \emph{trust realm of scope $s$} and defined as follows:
\begin{align*}
	\rho : S(E) & \to 		\wp(\mE)	\\
	          s & \mapsto	\rho(s) = \{ E \in \mE \ | \ \exists \, (s,p,c) \in T \}.
\end{align*}
\end{definition}

The trust realm simply collects all ecosystems accepting at least a single credential of scope $s$ issued by one "trusted" trust service provider.

\section{Application} \label{sec-application}

\subsection{Manufacturing Use Case}	\label{sec-mfg-use-case}

The following use case drawn from the international Manufacturing-X (IMX) project demonstrates how the theoretical model above may be applied to concrete instantiations of cross-ecosystem trust relationships.

Assume a set of independent (manufacturing) ecosystems which all individually recognize different credentials attesting membership of companies in the respective ecosystem, adherence of services to certain standards, or conformity of endpoints to some pre-agreed set of criteria. Due to the globalization of supply chains, an enterprise will have to participate in two or more such ecosystems at the same time. In the same vein, services offered and certified within one ecosystem may very well be needed by other ecosystems as well (just think of services related to digital product passports).

In order to avoid the overhead of requiring ecosystem participants having to re-certify their identity, service compliance, or other conformance credential for essentially the same purpose or scope, the idea of this ``ecosystem'' of ecosystems is to allow and define a kind of mutual recognition of credentials for the various scopes.

The user story is captured in figure \ref{fig-catalog} of a \emph{catalog} listing which (verifiable) credentials (VC) issued by which trust service provider (TSP) of an ecosystem (identified by the \texttt{Ecosystem ID}) shall be deemed equivalent for a given trust scope, for instance, for identity (\texttt{imxc:Identity}).

\begin{figure}[htbp]
\begin{center}
\includegraphics[width=\textwidth]{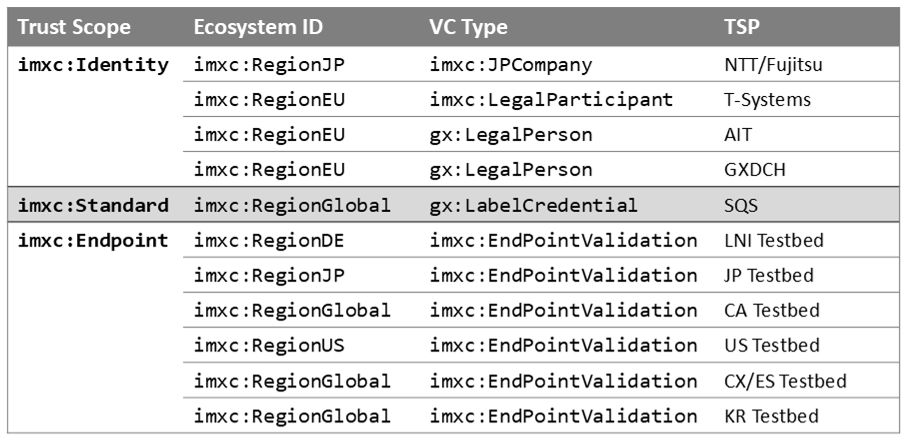}
\end{center}
\caption{Catalog of ecosystems, trust service providers, and credentials (VC - verifiable credential; TSP - trust service provider)}
\label{fig-catalog}
\end{figure}

From the previous exposition of our model of ecosystem trust profiles, it is evident that the formulation of such a \emph{credential equivalence} cannot be achieved at the current level of formalization.

The necessary extension of our theoretical model to achieve this requirement is given in the next section \ref{sec-theoretical-extension}.

\subsection{Theoretical Extensions} 	\label{sec-theoretical-extension}

The following three subsections provide different approaches of realizing the requirements stated in section \ref{sec-mfg-use-case} above.

\subsubsection{Top Down - Brute Force}	\label{sec-top-down-brute-force}

We first define the set of credentials $c$ accepted within a particular trust scope $s$ as follows\footnote{We assume the set of ecosystems $\mE$ to be suitably fixed for the following.}:

\begin{definition}[Credentials for a given trust scope]
Let $E_i = \langle P_i, T_i \rangle \in \mE$ be a collection of ecosystems. Then we define the set of all credentials issued and accepted under trust scope $s$, written $C(s)$, as follows:
\begin{equation}
	C(s^\star) = \{ \, c \; | \, \exists \, E_i = \langle P_i, T_i \rangle \in \mE : 
			(s^\star, p, c) \in E_i \, \}.
\end{equation}
\end{definition}

Then the following proposition on the equivalence of credentials formulates sufficient conditions under which the construction of section \ref{sec-mfg-use-case} may hold.

\begin{proposition}[Credential equivalence - Version 1]	\label{prop-cred-equiv-1}
Let $\mC_0$ denote the set of all credentials used in $\mE$\footnote{We cannot really use the previous set, $\mC$, because there might not be trust propositions for every credential $c \in \mC$ in our universe.},
\begin{equation}
	\mC_0 = \bigcup_{s \in \mS} C(s),
\end{equation} 
and assume the $C(s)$ to be a \emph{partition} of $\mC_0$. Then the relation $\cong \; \subseteq \mC_0 \times \mC_0$, defined by
\begin{equation}			\label{equ-credential-equiv-1}
	c_1 \cong c_2  \quad \Leftrightarrow \quad \exists 
		\;  s \in \mS : \; c_1 \in C(s) \land c_2 \in C(s)
\end{equation}
is an equivalence relation.

Informally, $c_1 \cong c_2$ denotes that credentials $c_1$ and $c_2$ are regarded as equivalent for proving the proposition underlying their (shared) trust scope.

\end{proposition}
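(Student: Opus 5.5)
The plan is to verify the three defining properties of an equivalence relation directly from equation (\ref{equ-credential-equiv-1}), using the partition hypothesis on the family $\{C(s)\}_{s \in \mS}$. Throughout, I will use the two facts a partition gives us: the blocks $C(s)$ cover $\mC_0$, and any two distinct blocks are disjoint. Equivalently, every $c \in \mC_0$ lies in exactly one block; write $s(c)$ for the scope of that block. (Empty $C(s)$ for scopes not used in $\mE$ can be discarded without affecting either fact.) The relation $\cong$ can then be rewritten as $c_1 \cong c_2 \Leftrightarrow C(s(c_1)) = C(s(c_2))$.

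\emph{Reflexivity}: for $c \in \mC_0$, the definition $\mC_0 = \bigcup_{s} C(s)$ gives some $s$ with $c \in C(s)$. Taking $c_1 = c_2 = c$ with this $s$ satisfies the right-hand side of (\ref{equ-credential-equiv-1}). Only the covering property is needed here; it holds by construction of $\mC_0$.

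\emph{Symmetry}: the right-hand side of (\ref{equ-credential-equiv-1}) is a conjunction. Swapping the conjuncts, the same witness $s$ serves for $c_2 \cong c_1$.

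\emph{Transitivity} is the only step that uses the partition hypothesis, so it is the crux. Suppose $c_1 \cong c_2$ via some $s$ and $c_2 \cong c_3$ via some $s'$. Then $c_2 \in C(s) \cap C(s')$, so the two blocks are not disjoint. By the partition assumption they must therefore coincide, $C(s) = C(s')$. Hence $c_3 \in C(s)$, and $s$ witnesses $c_1 \cong c_3$.

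One subtlety deserves a remark. Distinct scopes $s \neq s'$ might share the same credential set $C(s) = C(s')$. If "partition" is read as an indexed family of pairwise disjoint sets, this case is excluded outright. If it is read as a set of blocks, the argument above still goes through, because we only need block equality, not $s = s'$. I will state the argument in terms of block equality so that it works under either reading. I would also note in passing why the hypothesis is necessary: if a credential is accepted under two overlapping but different scopes, transitivity can fail.
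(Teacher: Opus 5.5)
Your proof is correct and follows the paper's argument. Reflexivity comes from the covering property $\mC_0 = \bigcup_s C(s)$, symmetry from commutativity of the conjunction, and transitivity from pairwise disjointness forcing the two witnessing blocks to coincide. The only difference is cosmetic: the paper concludes $s = s^\star$ by reading the partition as an indexed family with $C(s) \cap C(s^\star) = \emptyset$ for $s \neq s^\star$, whereas your conclusion $C(s) = C(s')$ works under either reading.
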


\begin{proof}
While the proof, technically speaking, is trivial as every partition automatically defines a canonical equivalence relation, we will repeat the proof for informational purposes.

Recall that an equivalence relation has to fulfil the laws of (i) reflexivity ($c \cong c$), (ii) symmetry ($c_1 \cong c_2 \Rightarrow c_2 \cong c_1$), and (iii) transitivity 
( $c_1 \cong c_2  \, \land \, c_2 \cong c_3 \; \Rightarrow \; c_1 \cong c_3$).

Also remember that a partition of the set $\mC_0$ into subsets $C(s)$ (indexed by $s$) fulfils the following conditions:
\begin{eqnarray}
	\bigcup_{s \ \in \ \mS} C(s) & = & \mC_0,			\label{equ-partition-1}		\\
	C(s) \cap C(s^\star) & = & \emptyset \quad \text{for  } s \ne s^\star. 
													\label{equ-partition-2}
\end{eqnarray}

Our proof will then simply proceed along the three conditions (i) - (iii).

\begin{enumerate}
\item \textbf{Reflexivity}. Because of eqn.~(\ref{equ-partition-1}), $\exists \, s \in \mS:
c \in C(s)$. The rest follows immediately from the symmetry of the antecedent of (\ref{equ-credential-equiv-1}) in its arguments.

\item \textbf{Symmetry}. Assume $c_1 \cong c_2$. That means that $\exists \, s \in \mS:
c_1 \in C(s) \land c_2 \in C(s)$. Because of the commutativity of $\land$, this can be immediately rewritten as $c_2 \in C(s) \land c_1 \in C(s)$ which directly implies
$c_2 \cong c_1$ as per the definition.

\item \textbf{Transitivity}. Assume $c_1 \cong c_2$ and $c_2 \cong c_3$. That means that
\begin{eqnarray}
	\exists \; s \in \mS & : & c_1 \in C(s) \land c_2 \in C(s),	\label{equ-trans-1}) \\
	\exists \; s^\star \in \mS & : & c_2 \in C(s^\star) \land c_3 \in C(s^\star).												\label{equ-trans-2}
\end{eqnarray}
Because of (\ref{equ-partition-2}), $s = s^\star$. That means that the second assertion of (\ref{equ-trans-2}) may be rewritten as $c_3 \in C(s)$ for the same $s$ as in 
equ.~(\ref{equ-trans-1}). Combining this with the first assertion of (\ref{equ-trans-1}), we find that
\begin{equation}
	\exists \; s \in \mS : c_1 \in C(s) \land c_3 \in C(s),
\end{equation}
which is the definition of $c_1 \cong c_3$.
\end{enumerate}
\end{proof}
This construction essentially says that every credential issued within a given trust scope is equivalent --- which is exactly the meaning purported by the catalogue of figure \ref{fig-catalog} above.

\textbf{Problem.} The problem with this approach is that it opens up the credential
equivalence relation to outside Byzantine (i.e., malevolent)``imposters'' who are able 
to insert their credentials into the trust scope easily by \emph{simply declaring}
malicious credentials. Assume $E^\star = \langle P^\star, T^\star \rangle$ to be
a malicious ecosystem trying to insert a new malicious certificate $c^\star$ into
trust scope $s$. $E^\star$ can simply do that by issuing $c^\star$ through one of its providers $p^\star \in P^\star$, thus asserting $(s, p^\star, c^\star) \in T^\star$. Then, by (\ref{equ-credential-equiv-1}), $c^\star \cong c \;\,  \forall \, c \in C(s)$.

Avoding this relies either (i) on the cooperation of the different ecosystems to only issue trust propositions in line with the other ecosystems, or (ii) on the maintainer of the meta-registry holding all the trust proposition advertisements. Both preconditions may be (more or less easily) met by using a suitable decentralized consensus mechanisms of the sort amply available by today's distributed ledger technologies (DLT).

\subsubsection{Cautious Approach - Minimal consensus} \label{sec-cautious-approach}

One way of avoiding the problem of ``imposters'' of the credential equivalence relation proposed in section \ref{sec-top-down-brute-force} is to relay on the construction used at the end of section \ref{sec-theory} when we introduced the notion of a \emph{direct trust relation} between ecosystems.

\begin{definition}[Credential equivalence - Version 2]	\label{def-cred-equiv-2}
Let $\mE = \{ E_1, E_2,...,E_N \}$ be a set of ecosystems $E_i = \langle P_i, T_i \rangle$ with credentials $\mC$. Define $T_0$ to be the set of trust propositions accepted by \emph{every} ecosystem in $\mE$, that is 
\begin{equation}			\label{equ-credential-T0}
	T_0 = \bigcap_{i = 1}^N \, T_i.
\end{equation} 
Then the relation $\cong_s \; \subseteq \mC \times \mC$
\begin{equation}			\label{equ-credential-equiv-2}
	c_1 \cong_s c_2  \quad \Leftrightarrow \quad 
			\exists \, (s, p, c_1) \in T_0 \land \exists \, (s, p', c_2) \in T_0	
\end{equation}
has the (informal) meaning of credential $c_1$ being equivalent to credential $c_2$, that is: $c_1 \cong_s c_2$. Note that this equivalence does not extend equivalence beyond the borders of trust scopes.
\end{definition}

Again, this definition has a nice property as the following proposition shows.

\begin{proposition}(Credential equivalence Version 2 - equivalence relation) \label{prop-cred-equiv-2}
The relation $\cong$ defined in definition \ref{def-cred-equiv-2} is an \emph{equivalence relation}.
\end{proposition}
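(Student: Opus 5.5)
The plan is to first fix the domain on which $\cong_s$ is to be an equivalence relation, and then read off the three laws from the shape of eqn.~(\ref{equ-credential-equiv-2}). For fixed $s$, I define
\begin{equation*}
	C_0(s) = \{ \, c \in \mC \; | \; \exists \, p \in \mP : (s,p,c) \in T_0 \, \},
\end{equation*}
the credentials accepted for scope $s$ by \emph{every} ecosystem of $\mE$. The key observation is that the right-hand side of eqn.~(\ref{equ-credential-equiv-2}) splits into two independent conjuncts, one mentioning only $c_1$ and one mentioning only $c_2$. So it says exactly $c_1 \in C_0(s) \land c_2 \in C_0(s)$, that is,
\begin{equation*}
	\cong_s \; = \; C_0(s) \times C_0(s).
\end{equation*}
A full Cartesian square of a set is trivially an equivalence relation on that set, with a single equivalence class.

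I will still verify the three laws explicitly, following the proof of Proposition~\ref{prop-cred-equiv-1}. For \textbf{reflexivity}, if $c \in C_0(s)$ there is some $(s,p,c) \in T_0$, and using this witness for both conjuncts gives $c \cong_s c$. \textbf{Symmetry} follows from the commutativity of $\land$, renaming $p$ and $p'$. \textbf{Transitivity} is easier than in Version 1: from $c_1 \cong_s c_2$ and $c_2 \cong_s c_3$ I keep only the witness $(s,p,c_1) \in T_0$ from the first and $(s,p'',c_3) \in T_0$ from the second, which already gives $c_1 \cong_s c_3$. The middle credential $c_2$ is never used. In particular, no partition hypothesis on the sets $C(s)$ is needed, because the scope $s$ is fixed by the subscript of $\cong_s$ rather than existentially quantified.

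The main obstacle is that Definition~\ref{def-cred-equiv-2} declares $\cong_s \subseteq \mC \times \mC$, and on all of $\mC$ reflexivity fails. Any credential $c \notin C_0(s)$, for instance one accepted by only some of the ecosystems or only under a different scope, satisfies $c \not\cong_s c$. So the proposition is false as literally stated over $\mC$. I would handle this by stating explicitly that the proposition is meant for $\cong_s$ restricted to $C_0(s)$, where it is an equivalence relation by the argument above. On the whole of $\mC$, $\cong_s$ is only a \emph{partial} equivalence relation (symmetric and transitive), since the arguments for those two laws never use reflexivity. I would also record that the result is a family of relations $\{\cong_s\}_{s \in \mS}$, one per scope, in keeping with the remark that this equivalence does not extend beyond the borders of trust scopes.
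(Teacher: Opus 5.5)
Your argument is correct and is essentially the paper's proof: the paper also gets transitivity by keeping the witness $(s,p_1,c_1)\in T_0$ from the first hypothesis and $(s,p_3,c_3)\in T_0$ from the second, discarding $c_2$, with the scope $s$ held fixed throughout. Your observation that $\cong_s = C_0(s)\times C_0(s)$ sharpens the paper, which simply calls reflexivity ``trivially satisfied'' on $\mC$; as you note, reflexivity fails for any credential with no scope-$s$ entry in $T_0$, so restricting to $C_0(s)$, or reading the statement as a partial equivalence relation on $\mC$, is the right repair.
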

\begin{proof}
Reflexivity and symmetry are trivially satisfied. For transitivity, assume $c_1 \cong c_2$ and $c_2 \cong c_3$. This means,
\begin{equation}			\label{equ-prop-2-1}
	c_1 \cong c_2 \; \Rightarrow \; \left\{
		\begin{array}{l}
			\exists \, (s, p_1, c_1) \in T_0, 	\\
			\exists \, (s, p_2, c_2) \in T_0,
		\end{array}
		\right.
\end{equation}
and
\begin{equation}			\label{equ-prop-2-2}
	c_1 \cong c_2 \; \Rightarrow \; \left\{
		\begin{array}{l}
			\exists \, (s, p'_2, c_2) \in T_0, 	\\
			\exists \, (s, p_3, c_3) \in T_0.
		\end{array}
		\right.
\end{equation}
Then take the first equation of (\ref{equ-prop-2-1}) and the second equation of (\ref{equ-prop-2-2}). This immediately yields
\begin{equation*}
	(s, p_1, c_1) \in T_0 \land (s, p_3, c_3) \in T_0 \; \Rightarrow \; c1 \cong c3.
\end{equation*}
\end{proof}

This approach clearly eliminates the problem of ``imposters'' present in proposition \ref{prop-cred-equiv-1} by virtue of equation (\ref{equ-credential-T0}). The construction is also stable against ``withdrawal'' attacks where an ecosystem $E'$ may decide to remove a credential $c'$ it issues from the common pool $T_0$ because all other ecosystems $E \in \mE \setminus \{ E' \}$ may freely continue to issue and trust this credential $c'$. The only remaining problem is if this ecosystem is a \emph{monopoly} provider of $c'$.

\subsubsection{A Fragility Theorem on Trust}		\label{sec-fragility-theorem}

It is evident from the discussions in the preceding sections that \emph{any} trust relation between credentials, and, \emph{a forteriori}, between ecosystems, is difficult to define and/or to maintain over time. This is nicely captured in the (otherwise not so satisfying) proposition \ref{prop-fragility}:

\begin{proposition}[Fragility of ecosystem trust] \label{prop-fragility}
Let $\mE = \{ E_1, E_2,...,E_N \}$ be a (non-empty) set of ecosystems (ecosystem trust profiles) of the form
$E_i = \langle P_i, T_i \rangle$, and let the relation $\preceq \; \subseteq \mE \times \mE$ be an \emph{arbitrary} trust relation between ecosystems with the (informal) meaning of $E_i \preceq E_j$ as of ``$E_i$ trusts $E_j$''. 

Further, let the map $\kappa: \mE \times \mE \to \wp(\mT)$\footnote{$\wp(A)$ denotes the power set of $A$.} 
denote the necessary conditions for ecosystem $E_i$ in terms of which trust propositions $t \in \mT$ are required to render $E_i \preceq E_j$ true. Formally,
\begin{equation}			\label{equ-kappa-impl}
	E_i \preceq E_j \; \Rightarrow \; \kappa(E_i, E_j) \ne \emptyset \; 
										\subseteq \, T_i \cap T_j.
\end{equation}
We require 
$\kappa(E_i, E_j) \ne \emptyset$ to avoid situations where the trust relation depends on a state of affairs not captured in the trust propositions shared by \emph{both} ecosystems.\footnote{If $\kappa(E_i,E_j) \not\subseteq T_j$, the trust of ecosystem $E_i$ in $E_j$ would not depend on any trust proposition of $E_j$. We will not further elaborate on this fairly pathological situation where one's trust in another does not depend on the other at all.}

Then this arrangement is not \emph{stable}\footnote{We admit that the notion of ``stability'' is not formalized yet. Nevertheless, we hope the reader already appreciates the consequences of this informal ``proof''.} in the case of sovereign ecosystems without any further coordination/governance mechanism.
\end{proposition}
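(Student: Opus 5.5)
The plan is to make the informal notion of ``stability'' precise enough to argue with, and then show it fails by an explicit unilateral change. I will call a trust relation $\preceq$ \emph{stable} if, whenever $E_i \preceq E_j$ holds, it cannot be falsified by an action that only one ecosystem takes on its own trust profile, without any consent of the other. This matches the independence condition of the cross-ecosystem trust dilemma. A sovereign ecosystem $E_j$ may replace its profile $\langle P_j, T_j\rangle$ by any $\langle P_j, T_j'\rangle$ with $T_j' \subseteq \mT$ chosen freely; in particular it may delete trust propositions. Stability of $\preceq$ would then mean that $E_i \preceq E_j$ is invariant under every such unilateral modification by $E_j$, and symmetrically by $E_i$.

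The core step uses the necessary condition (\ref{equ-kappa-impl}). Assume $E_i \preceq E_j$. Then $\kappa(E_i,E_j)$ is non-empty, so I can pick some $t \in \kappa(E_i,E_j) \subseteq T_i \cap T_j$. I let $E_j$ withdraw $t$, setting $T_j' = T_j \setminus \{t\}$, which is a purely sovereign act. In the modified universe, with $E_j' = \langle P_j, T_j'\rangle$, we have $t \notin T_i \cap T_j'$. I must read $\kappa$ as fixed by the trust relation's requirements, not recomputed adversarially after the change; I will state this explicitly as part of the formalization. Under that reading, $\kappa(E_i,E_j) \not\subseteq T_i \cap T_j'$, and the contrapositive of (\ref{equ-kappa-impl}) gives $E_i \not\preceq E_j'$. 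Because $\kappa(E_i,E_j) \subseteq T_i$ as well, the same argument works if $E_i$ withdraws $t$ instead. So either party alone can destroy the relation. The non-emptiness requirement on $\kappa$ is exactly what makes such a $t$ exist; the footnoted pathological case is what the assumption excludes.

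Finally, I will note that nothing inside the model of Definition \ref{def-eco-trust-profile} constrains how $T_j$ evolves. Preventing the withdrawal therefore requires a mechanism external to the trust profiles, such as a contract, a governance authority, or a DLT consensus as mentioned after Proposition \ref{prop-cred-equiv-1}. This also explains the ``monopoly'' caveat after Proposition \ref{prop-cred-equiv-2}: in Version 2 the removal of a proposition from $T_0$ is again a single-ecosystem action.

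The main obstacle is the formalization of stability, together with the status of $\kappa$ under changes of the profiles. If $\kappa$ were allowed to depend on the current profiles arbitrarily, the relation could simply adapt, for instance by switching to another shared proposition. To handle this, I will first try to show instability for every choice of $\kappa$ with the following strategy: remove all of $T_i \cap T_j$ from $T_j$. After this removal $T_i \cap T_j' = \emptyset$, so no non-empty $\kappa(E_i,E_j') \subseteq T_i \cap T_j'$ can exist, and hence $E_i \not\preceq E_j'$ whatever $\kappa$ is. This stronger removal sidesteps the fixed-$\kappa$ issue entirely, and I would present it as the main argument.
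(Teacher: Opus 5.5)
Your first argument (withdraw a single $t \in \kappa(E_i,E_j)$ and contrapose (\ref{equ-kappa-impl})) is exactly the paper's proof. The paper likewise tacitly keeps $\kappa(E_1,E_2)$ fixed after $E_2$ changes its profile: it concludes $\kappa(E_1,E_2)\not\subseteq T_2$ without re-evaluating $\kappa$.

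Your main argument takes a genuinely different and stronger route. In effect you observe that (\ref{equ-kappa-impl}) only ever forces $T_i \cap T_j \neq \emptyset$. Every admissible $\preceq$ is therefore contained in the direct mutual trust relation $\Box$, and $\kappa$ is a mere witness. The trustee can then break trust by withdrawing the whole intersection, whatever $\kappa$ becomes on the new profile.

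This covers exactly the cases where the paper's one-step argument fails. Consider relations whose witness set is naturally computed from the current profiles, such as $\Box$ itself with $\kappa(E_i,E_j)=T_i\cap T_j$, or $\sqsubset_s$ with $\kappa$ the set of foreign-provider witnesses. When several shared propositions exist, retracting one of them need not falsify the relation. What the paper's version buys in return is granularity: when the requirement set is fixed in advance, a single withdrawal already suffices, whereas yours needs the more drastic removal of all shared propositions.

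Both routes share two unstated assumptions that you may want to make explicit. First, (\ref{equ-kappa-impl}) must continue to govern $\preceq$ after a profile changes, since the modified profile is not an element of the original $\mE$. Second, at least one trusting pair must exist: an empty $\preceq$ is vacuously stable under your definition, just as the paper's ``without loss of generality'' silently presupposes $E_1 \preceq E_2$.
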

\begin{proof}
Assume, without loss of generality, that $E_1 \preceq E_2$, that is, $E_1$ as trustor trusts another ecosystem $E_2$, the trustee. 
Then there exists at least one $t \in \kappa(E_1, E_2)$ with $t \in T_2$, that is, depending on the trustee (ecosystem).

The trustee $E_2$ then ``decides'' to withdraw the assertion of trust proposition $t$. $E_2$ is able to do that because it is, according to the assumptions, (i) sovereign in its decisions, and (ii) not bound by any external coordination or governance requirements forbidding such a decision.

This means, that now $\kappa(E_1,E_2) \not\in T_2$ because $E_2$ has retracted
$t$ from $T_2$. By contraposition of eqn.~(\ref{equ-kappa-impl}), we get $E_1 \not\preceq E_2$. This means, that after such a change by the trustee ($E_2$), the original trust relation no longer holds: $E_1$ no longer trusts $E_2$ as a consequence of this move.
\end{proof}

Note that we do \textbf{not} claim that there cannot be collections of ecosystems with a useful notion of a trust relation. What our \emph{fragility theorem} proves is that this is an unstable state of affairs in the absence of additional ``stabilization'' mechanisms. Any such mechanism---as shown in the proof---needs to curtail an ecosystem's ability to arbitrarily withdraw (!) certain trust propositions from its (advertised) set.

Also note that definition \ref{def-cred-equiv-2} and proposition \ref{prop-cred-equiv-2} on our second version of credential equivalence do \textbf{not} violate proposition \ref{prop-fragility} above because definition \ref{def-cred-equiv-2} does \emph{not} define a trust relation on \emph{ecosystems} but on \emph{credentials}.

\section{Implementation}

\subsection{Approach}

The concepts and methods laid out in the previous sections have been implemented by the Gaia-X Association (https://gaia-x.eu/) in the form of a minimum viable product (MVP) at this point in time using the following two elements:

\begin{enumerate}

\item \label{item-ontology}
\textbf{Ecosystem Trust Profile Ontology}: A machine-readable schema\footnote{In fact, it is a full-fledged OWL ontology specified in LinkML \cite{LinkML-2024}.} for specifying ecosystem trust profiles in an automatable way	
	
\item \label{item-GXMR}
\textbf{Gaia-X Meta-Registry} (GXMR): A mechanism for end users and machines alike to discover and retrieve ecosystem trust profiles for various ecosystems

\end{enumerate}
Strictly speaking, the \ref{item-GXMR}nd component is not needed but only a mechanism to ``somehow'' (a) find and (b) read the ecosystem trust profile of a particular ecosystem, i.e., the ecosystem with which a participant of another ecosystem wants to do transactions with). 

A typical implementation pattern for requirement (a) is, of course, a suitable form of a \textbf{catalog} holding and presenting the ecosystem trust profiles of (several) ecosystems. Catalogs holding this type of trust-related meta-data are often called (trust) registries. We are not aware of any real-world conceptualization or implementation of any such \textit{Meta-Registry}.

Requirement (b) may be implemented by a suitable protocol between a system querying an ecosystem for its trust profile and an endpoint on another system providing this information. This has indeed been proposed in the form of the Trust Reqistry Query Protocol (TRQP) \cite{ToIP-2025a}.\footnote{The Trust over IP working group did also start work on defining a generic ontology for ecosystem registry entries, the Trust Register Query Language (TRQL) \cite{ToIP-2025b}. The specification remains at a decidedly generic level, though, and cannot be directly used for our purposes.}

\subsection{Ontology}

In our ontology, an Ecosystem is anything with an \texttt{ecoID} (the class name in the ontology in figure \ref{fig-ontology}) and a (non-null) set of \texttt{ecoTrustScope}s with one or more trust service providers (TSPs) (\texttt{ecoTSP}) issuing one or more types of verifiable credentials (VCs) (\texttt{ecoCredentialType}). The outline of the ontology is specified in figure \ref{fig-ontology} in the form of a class diagram.

\begin{figure}[htbp]
\begin{center}
\includegraphics[height=8cm]{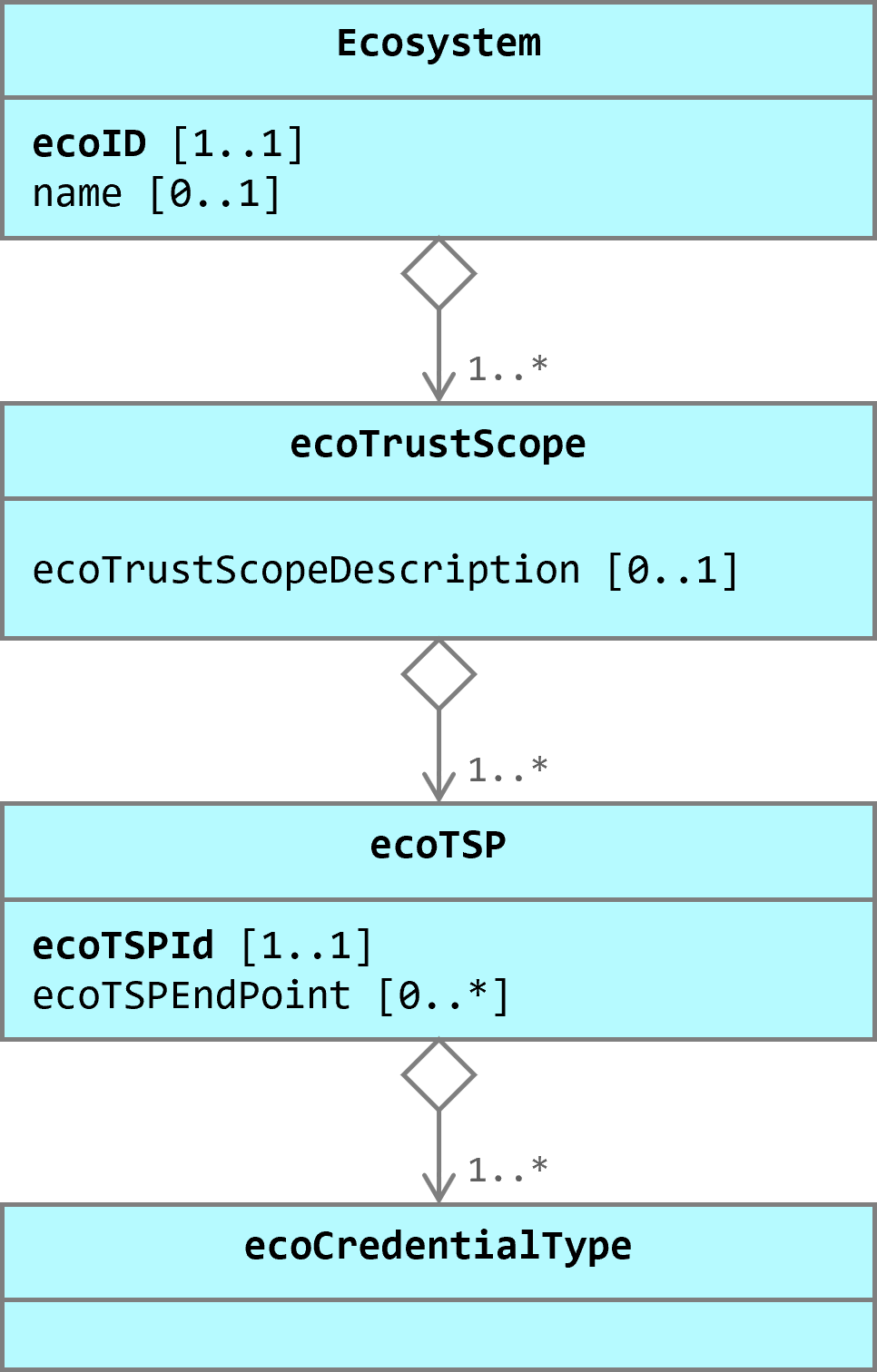}
\end{center}
\caption{Class diagram of the ecosystem trust ontology}
\label{fig-ontology}
\end{figure}

\textbf{Notes}
\begin{itemize}

\item We do not require that an ecosystem recognizes ``participation'' or ``identity'' as one of its core ``properties''. Any identifiable collection of trust scopes, trust services providers, and VCs will do.
\item Many ecosystems will indeed issue ``Participant Credentials'' like Catena-X or Gaia-X itself which is perfectly possible in our scheme.
\item Nevertheless, our approach allows ecosystems of IoT device credentials or Digital Twins or DPPs – as long as one or more TSPs issue VCs to that effect.
\item Currently, semantics of the \texttt{ecoTSPEndpoint} property are not defined or constrained at all. In view of the many different VCs and implementations, any such standardization effort may also be futile.
\item \texttt{ecoTrustScope} identifiers are currently not constrained in any way – as long as they are unique within a single \texttt{ecoSystem}. Maybe we see some standardization in the future here like for trusted registries, trusted identity providers, and other.

\end{itemize}

\textbf{Location}:
You will find the complete (machine-readable) ontology at https://gaia-x-meta-registry-99af54.gitlab.io/ecosystems/gaia-x/

\subsection{Gaia-X Meta-Registry}

In parallel, the CTO team of the Gaia-X AISBL is working on providing a suitable Gaia-X Meta-Registry mechanism for collecting and exposing ecosystem trust profiles. The current state of this work is published as a MVP at https://gaia-x-meta-registry-99af54.gitlab.io/ecosystem\_ontology\

\section{Data Spaces}

\subsection{Data Space Trust Frameworks}

This section expands the formalization towards full-fledged data spaces, that is, ecosystems with trust profiles specifically allowing the mutual sharing of data.
For that, we will need to introduce the following new concepts to augment the ecosystem trust profile defined in section \ref{sec-theory} above towards the definition of a full-fledged \textbf{trust framework} for a data space.

\begin{definition}[Trust framework]			\label{def-trust-framework}
Let $E = \langle P, T \rangle \subseteq \mE$ be an ecosystem trust profile. Assume that all rules and policies (i.e., sets of rules) applicable for all participants of this data space to be captured in the set $R = \{r_1, r_2,... \}$. 

Let the binary relation $\vdash \; \subseteq T \times R$, in particular
$t \vdash r \equiv (t,r) \in \; \vdash$, denote the fact that trust proposition $t \in T$ \emph{asserts} or \emph{attests} rule $r$. For consistency reasons we require that 
\begin{equation}
	\forall \,r \in R: \exists \, t \in T: t \vdash \, r .			\label{eq-a-consistency}
\end{equation}
 For ease of reference, we sometimes set $A \equiv \; \vdash$, the set of \emph{assertions} describing claims and evidences $t$ fulfilling some rules $r$.

Then the structure $F = \langle T, R, \, \vdash \, \rangle = \langle T, R, A \rangle$ is called the \emph{trust framework} of ecosystem $E$.
\end{definition}

The consistency relation (\ref{eq-a-consistency}) is needed to ensure that the trust framework does not contain rules which cannot be satisfied by any permissible trust proposition $t \in T$. \\

\textbf{Example.} A very small example of a trust framework is given in the following:
\begin{eqnarray}
	T & = & \{ (\iota, c_I, t_0), (\mu, c_M, t_0) \},		\label{eq-example-T}\\
	R & = & \{ r_I, r_M \},									\label{eq-example-R}\\
	A = \; \vdash & = & \{ (\iota, c_I, t_0) \, \vdash \, r_I, \; (\mu, c_M, t_0) \, \vdash \, r_M \}.	\label{eq-example-A}
\end{eqnarray}
Eqn.~(\ref{eq-example-T}) lists two trust propositions, one for trust scope $\iota$ (identity), and another one for membership trust scope $\mu$, both of which being attested by the same trust service provider $t_0$ using two different credentials, $c_I$ for identity, and $c_M$ for membership.

The data space recognizes two different rules for identity and membership, $r_I$ and $r_M$, in eqn.~(\ref{eq-example-R}), and accepts the two credentials mentioned above issued by TSP $t_0$ to attest these rules as per eqn.~(\ref{eq-example-A}).\\

The final missing ingredient for working with data space trust frameworks are the participants of a data space which we take from the 
realm $\mO = \{O_1, O_2,...\}$ of organisations potentially participating or supporting data spaces. Then we define the trust framework-related parts of a \textbf{data space} as follows.

\begin{definition}[Data space]			\label{def-data-space}
Let $E = \langle P, T \rangle \subseteq \mE$ be an ecosystem trust profile, $F = \langle T, R, \vdash \, \rangle$ a trust framework building on $E$, and
$O \in \mO$ a set of data space participants.

Then the structure $D = \langle O, T, R, \vdash \, \rangle$ is called a \emph{data space}.
\end{definition}

\begin{quote}
\textbf{Note.} We readily acknowledge that the current formalization of a \emph{data space} is heavily tilted towards trust-related concepts and does not contain other highly important elements of any data space, for instance, the data to be shared, or the components for realizing trusted data transactions  such as data space participant agents or connectors.
\end{quote}

Data spaces are for sharing data, which we formalize as follows:

\begin{definition}[Data sharing relation]			\label{def-data-sharing-rel}
The relation $\rhd \subseteq \mO \times \mO$ is called a data sharing relation with the intended semantics that $p \rhd q$ signifies that a data provider or
data holder $p$ is sharing data or making data accessible to a data consumer or data user $q$.
\end{definition}

We will now use this formal model to derive necessary general conditions for data transactions taking place.

\subsection{Necessary general conditions for data transactions} 

For the following, let $D = \langle O, T, R, \vdash \, \rangle$ denote a particular data space.
Data spaces unite a set of participants who all agree to the same set of governance rules, in particular, to the same data space trust framework, in order to facilitate and allow fast scaling of data exchange amongst all participants. Assume, a data provider $p \in O$ wants to share data with data user $q \in O$, that is, both parties to this data transaction want to allow $p \rhd q$.

Now, the core function of a data space trust framework is to specify all general conditions, criteria, rules, or policies, which need to be fulfilled in order to make this happen. We denote with $R_\rhd \subseteq R$ all rules of the data space trust framework related to allow a data transaction. Let us turn to the two partners of this data transaction:

\begin{enumerate}
\item \textbf{Data provider $p$ perspective.}
For a data transaction to take place, a provider $p$ will require the data consumer $q$ to fulfil all rules contained in the data space trust framework related to data sharing. Denote these rules with $R_{\rhd\lhd}$. Data consumer $q$ then will need to provide sufficient attestations $A_{\rhd\lhd} \subseteq A$ to ``convince'' $p$ to trust them.
\item \textbf{Data consumer $q$ perspective.}
For a data transaction to take place, a consumer $q$ will require the data provider $p$ to fulfil all rules contained in the data space trust framework related to data sharing. Denote these rules with $R_{\rhd\rhd}$. Data provider $p$ then will need to provide sufficient attestations $A_{\rhd\rhd} \subseteq A$ to ``convince'' $q$ to trust them.
\end{enumerate}

This allows us to summarize our observation in the following proposition describing all necessary conditions for permitting trusted data transactions in data space $D$..

\begin{proposition}[Necessary general conditions for data sharing]		\label{th-nec-cond-data-sharing}
Let $D = \langle O, T, R, A \rangle$ denote a particular data space with 
$A_\rhd = A_{\rhd\lhd} \cup A_{\rhd\rhd} \subseteq A$. Then the necessary general condition to allow a trusted data transaction to take place
between data provider $p$ and data consumer $q$ are:

\begin{equation}		\label{eq-nec-cond-data-sharing}
	\forall a_i \in A_\rhd: a_1 \land a_2 \land a_3 \land ... \  \Leftrightarrow \  \Diamond( p \rhd q ) \; \forall p,q \in O.
\end{equation}
The operator $\Diamond$ is taken from a suitable modal logic (such as T, S4, or S5 \cite{Hughes-1996}), and, as usually, denotes possibility. Hence, $\Diamond r$ for a proposition $r$ is read as ``It is possible that $r$''.
\end{proposition}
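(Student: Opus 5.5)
The plan is to read Proposition~\ref{th-nec-cond-data-sharing} as a biconditional, which I will split into two directions. In the informal semantics of the data space trust framework, the conjunction ranges over all assertions $a_i \in A_\rhd = A_{\rhd\lhd} \cup A_{\rhd\rhd}$, and $\Diamond(p \rhd q)$ says that a trusted data transaction between $p$ and $q$ is \emph{permitted} by $D$. Before arguing, I fix the reading of the two sides. The right-hand side is \emph{admissibility} under the rules $R_\rhd = R_{\rhd\lhd} \cup R_{\rhd\rhd}$; it is not the actual occurrence of $p \rhd q$. The left-hand side states that every rule in $R_\rhd$ is attested by some trust proposition $t \in T$ via $\vdash$. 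Here I will use the consistency condition~(\ref{eq-a-consistency}) of Definition~\ref{def-trust-framework}, which guarantees that such attesting $t$ exist at all, so the conjunction is not vacuously unsatisfiable.

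\textbf{Direction ($\Leftarrow$), necessity.} Suppose $\Diamond(p \rhd q)$ holds in some accessible world. By the provider perspective, $p$ admits the transaction only if $q$ fulfils every $r \in R_{\rhd\lhd}$. By the consumer perspective, $q$ admits it only if $p$ fulfils every $r \in R_{\rhd\rhd}$. Fulfilment of a rule is only \emph{observable} within $D$ through an assertion $t \vdash r$, since the trust framework is the sole vehicle of attestation. Hence each $a_i \in A_{\rhd\lhd} \cup A_{\rhd\rhd}$ must hold, and the conjunction follows. This step is a contraposition: if some $a_k$ fails, one of the two parties lacks a required attestation and refuses, so in no world does $p \rhd q$ hold, i.e.\ $\neg\Diamond(p \rhd q)$.

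\textbf{Direction ($\Rightarrow$), sufficiency.} Here I invoke the modelling assumption that $R_\rhd$ collects \emph{all} general conditions the framework imposes on data sharing. If every $a_i$ holds, then no rule of $D$ forbids $p \rhd q$. Taking an accessible world in which the two parties actually agree, I obtain $p \rhd q$ there, hence $\Diamond(p \rhd q)$. The system T suffices, since reflexivity lets the actual world serve as witness; S4 or S5 only add closure properties that are not needed.

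The main obstacle is this sufficiency direction. Because neither $\Diamond$ nor ``fulfils'' is formally tied to $\vdash$, the implication holds only by the completeness stipulation on $R_\rhd$, together with the fact that $\Diamond$ expresses mere possibility rather than occurrence. I would therefore state explicitly, as a frame condition, that a world is accessible iff it satisfies $R_\rhd$, and then derive both directions from that stipulation. The universal quantifier over $p, q \in O$ is handled uniformly, since nothing in the argument depends on the particular pair.
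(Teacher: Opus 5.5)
Your route is the same as the paper's. Its proof is only ``See above'': the provider and consumer perspectives, which are your necessity direction, plus the remark that the trust framework specifies \emph{all} conditions for data sharing, which is your completeness stipulation. You are right that the $\Rightarrow$ direction holds only by such a stipulation.

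The gap is in the modal layer you add to make this precise. The left-hand side is evaluated at the actual world $w_0$. But $\Diamond(p \rhd q)$ only yields an accessible world $w_1$ in which $p \rhd q$ holds, and the rules then give you the attestations at $w_1$, not at $w_0$. Your contrapositive (``if some $a_k$ fails \ldots\ in no world does $p \rhd q$ hold'') tacitly assumes that $a_k$ fails in every accessible world. Your frame condition (``accessible iff it satisfies $R_\rhd$'') does not provide this. A world in which $q$ has since obtained the missing credential and the transaction takes place satisfies $R_\rhd$, so $\Diamond(p \rhd q)$ is true while $a_k$ is false at $w_0$. Accessibility has to be anchored to the current attestation state: $w'$ is accessible from $w$ only if both agree on which $a_i$ hold.

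The sufficiency direction has the matching problem. ``Reflexivity lets the actual world serve as witness'' is wrong, because nothing says $p \rhd q$ holds at $w_0$. The witness must come from an explicit plenitude assumption: whenever the attestations hold at $w$, some accessible world with the same attestation state realises $p \rhd q$. No choice among T, S4 and S5 supplies this.

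There are two further slips.
\begin{itemize}
\item ``Fulfilment of a rule is only observable through an assertion $t \vdash r$'' gives, for each $r \in R_\rhd$, \emph{some} attesting assertion, which is your announced reading of the left-hand side. Yet you conclude that \emph{each} $a_i \in A_\rhd$ holds. These agree only if every rule has a unique attesting assertion in $A_\rhd$. With alternative credentials for one rule (the setting of Section~\ref{sec-application}), the literal conjunction is strictly stronger and its necessity direction fails. So either impose uniqueness, or state that you prove the $\bigwedge_{r \in R_\rhd}\bigvee_{t \vdash r}$ form.
\item The quantifier over $p,q$ is not handled ``uniformly''. Your argument uses that $q$ holds the $R_{\rhd\lhd}$-attestations and $p$ the $R_{\rhd\rhd}$-ones, so the $a_i$ must be indexed by the pair. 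With a pair-independent left-hand side, the biconditional would force either all pairs or no pair to be able to share.
\end{itemize}
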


\begin{proof}
See above.
\end{proof}

In the following, we will extend the notation we introduced in proposition (\ref{th-nec-cond-data-sharing}) and write $T_\rhd$ for the set
$T_\rhd = \{ t \in T \ | \  (t,r) \, \vdash \, A_\rhd \}$.

\subsection{Data Space Interoperability}

Despite the ubiquitous requirement of ``data space interoperability'', we are not aware of any true formal conceptualization of this ethereal, as it seems, concept.

\begin{definition}[Data space interoperability]		\label{def-dsp-interop}
Let $U = \langle O^U, T^U, R^U, A^U \rangle$ and $V = \langle O^V, T^V, R^V, A^V \rangle$ be two data spaces.
Then data spaces $U$ and $V$ are \emph{interoperable}, written as $U \bowtie V$, when the following condition holds.
\begin{equation}		\label{eq-dsp-interop}
	U \bowtie V \ \Leftrightarrow \ \Diamond(p^U \rhd q^V) \land \Diamond(r^V \rhd s^U) \quad \forall p^U, s^U \in O^U; \forall q^V, r^V \in O^V.
\end{equation}
\end{definition}

This definition of \emph{interoperability} emphasises the capability of the mutually exchange of data between any two participants of the two data spaces.
The following theorem determines the conditions under which this may be achieved.

\begin{proposition}[Data space interoperability]		 \label{th-dsp-interop}
Let $U = \langle O^U, T^U, R^U, A^U \rangle$ and $V = \langle O^V, T^V, R^V, A^V \rangle$ be two data spaces.
Then
\begin{equation}
	U \bowtie V \  \Leftrightarrow \  T_\rhd^U = T_\rhd^V.
\end{equation}
with $T_\rhd^U$ and $T_\rhd^V$ denoting the subset of trust propositions of the respective data spaces $U$ and $V$ related to the necessary general conditions for data sharing (as per proposition \ref{th-nec-cond-data-sharing}).
\end{proposition}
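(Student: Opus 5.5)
The plan is to obtain both directions by combining Definition~\ref{def-dsp-interop} with Proposition~\ref{th-nec-cond-data-sharing}. That proposition characterises $\Diamond(p \rhd q)$ inside one data space as exactly the conjunction of the attestations in $A_\rhd$. It therefore reduces to: the relevant participant can exhibit trust propositions $t \in T_\rhd$ that the counterparty's framework accepts. The first step is to state what Proposition~\ref{th-nec-cond-data-sharing} means across two data spaces. For a cross-space transaction $p^U \rhd q^V$, the provider $p^U$ evaluates $q^V$ against the rules of $U$, so $q^V$ must present attestations $t \vdash r$ with $t \in T^U_\rhd$. Symmetrically, $q^V$ evaluates $p^U$ against $V$, so $p^U$ must present attestations from $T^V_\rhd$. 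I will take as a modelling assumption, which has to be stated explicitly, that a participant of a data space can only present the trust propositions of its own data space (participants of $U$ hold exactly the credentials in $T^U$).

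For the direction ($\Leftarrow$), assume $T^U_\rhd = T^V_\rhd$. Take arbitrary $p^U$ and $q^V$. Since $q^V$ satisfies the data-sharing conditions in $V$ (Proposition~\ref{th-nec-cond-data-sharing} applied within $V$), it holds attestations for every $t \in T^V_\rhd = T^U_\rhd$. These are exactly the attestations $A^U_{\rhd\lhd}$ that $p^U$ requires. The same argument with roles swapped covers $A^U_{\rhd\rhd}$ versus $A^V_{\rhd\rhd}$. By the ``$\Leftarrow$'' half of (\ref{eq-nec-cond-data-sharing}) we obtain $\Diamond(p^U \rhd q^V)$. The argument for $\Diamond(r^V \rhd s^U)$ is symmetric. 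Since all participants were arbitrary, (\ref{eq-dsp-interop}) gives $U \bowtie V$.

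For ($\Rightarrow$), assume $U \bowtie V$ and argue by contraposition. Suppose, say, some $t \in T^U_\rhd \setminus T^V_\rhd$. Then $t$ attests a rule in $R^U_\rhd$, which is nonempty-supported by the consistency condition (\ref{eq-a-consistency}). A participant $q^V$ of $V$ cannot produce $t$, because $t \notin T^V$ (or at least $t \notin T^V_\rhd$). Hence the conjunction in (\ref{eq-nec-cond-data-sharing}) fails for the pair $(p^U, q^V)$, and the ``$\Rightarrow$'' half yields $\neg\Diamond(p^U \rhd q^V)$, contradicting $U \bowtie V$. The case $T^V_\rhd \setminus T^U_\rhd \neq \emptyset$ is handled symmetrically using $r^V \rhd s^U$. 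This direction also needs $O^U, O^V \neq \emptyset$, so that a witnessing pair exists. Otherwise (\ref{eq-dsp-interop}) holds vacuously and the equivalence could fail.

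The main obstacle is the step in ($\Rightarrow$) asserting that a $V$-participant cannot supply $t \notin T^V_\rhd$. Nothing in Definition~\ref{def-data-space} forbids an organisation from belonging to both data spaces or holding foreign credentials. The cleanest fix is to make explicit the assumption that a participant's available attestations are exactly those of its home data space, and to read $T_\rhd$ as the full set of accepted propositions relevant to sharing. A mismatch in the \emph{accepted} sets then forces rejection. I would state this assumption right before the proof rather than try to derive it from the current formalism.
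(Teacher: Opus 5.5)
Your proof is correct under the assumptions you state, and it is organised differently from the paper's. The paper first proves a one-way lemma (Lemma~\ref{lem-2}): $\Diamond(p^U\rhd q^V)$ forces the crosswise inclusions $T^U_{\rhd\rhd}\subseteq T^V_{\rhd\rhd}$ and $T^V_{\rhd\lhd}\subseteq T^U_{\rhd\lhd}$, on the grounds that whatever the counterparty can supply must lie in what one accepts. Applying the lemma to both conjuncts of (\ref{eq-dsp-interop}) gives separate equalities for the provider-side and consumer-side sets, and their union gives $T^U_\rhd=T^V_\rhd$. The converse is simply declared obvious.

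You argue instead by contraposition with a single witness $t$, using the opposite inclusion logic: what one requires must be suppliable by the counterparty. After symmetrisation this yields the same equality. The paper's decomposition gives a stronger intermediate result, but if one accepts its lemma, equality of the unions is no longer sufficient. Take $T^U_{\rhd\rhd}=T^V_{\rhd\lhd}=\{a\}$ and $T^U_{\rhd\lhd}=T^V_{\rhd\rhd}=\{b\}$ with $a\neq b$: the unions agree, yet the lemma already excludes $p^U\rhd q^V$. Your model, in which each participant holds and can present the propositions of its home $T_\rhd$, is what makes the statement as written true in both directions. Your remark that ($\Rightarrow$) needs $O^U,O^V\neq\emptyset$ is a genuine correction that the paper omits.

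Two points need tightening.

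\textbf{The strong supply assumption is required.} As you suspect, it is not optional. Suppose participants of $V$ may present anything in $T^V$, and take $T^U_\rhd=T^V_\rhd\cup\{t\}$ with $t\in T^V\setminus T^V_\rhd$. Then every cross-space demand can be met, so $U\bowtie V$ holds while $T^U_\rhd\neq T^V_\rhd$. State up front that each participant holds every proposition of its home $T_\rhd$ and, in a sharing transaction, can present only those. This also covers the step ``$q^V$ satisfies the data-sharing conditions in $V$'' in your ($\Leftarrow$) direction. The paper makes the analogous assumption, component-wise, inside Lemma~\ref{lem-2}.

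\textbf{The ($\Rightarrow$) witness needs a case split.} In your ($\Leftarrow$) step you use the split that $p^U$ demands only $A^U_{\rhd\lhd}$ of $q^V$. Under that split, a witness $t$ lying only in $T^U_{\rhd\rhd}$ does not break $p^U\rhd q^V$. It breaks $r^V\rhd s^U$ instead, where the consumer $s^U$ demands $t$ of the provider $r^V$. Replace ``handled symmetrically'' by a case split on which component contains $t$. Since $U\bowtie V$ supplies both transactions, the contradiction follows in either case.
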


We need to stress that, in the following, trust propositions $T_\rhd$ are regarded to be extremely generic and to cover, for instance, all the five facets of ISO/IEC DIS 19941-1 \cite{Iso-2026} such as transport, syntactic, semantic, behavioural, and policy interoperability. Additionally, technical standards such as the protocols and participant agents to be used are also assumed to be included in $T_\rhd$.

We will use the following lemma to simplify our proof.

\begin{lemma}			\label{lem-2}
For a one-way data sharing between different data spaces to be possible, the following conditions need to hold for the respective trust propositions.
\begin{equation}
	\Diamond( p^U \rhd q^V) \ \Rightarrow  \  T_{\rhd\rhd}^U \subseteq T_{\rhd\rhd}^V \; \land \; T_{\rhd\lhd}^V \subseteq T_{\rhd\lhd}^U.
\end{equation}
\end{lemma}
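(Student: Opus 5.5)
The plan is to argue by contraposition on each of the two inclusions separately. Each inclusion is read off from one party's side of the one-way transaction $p^U \rhd q^V$, using Proposition~\ref{th-nec-cond-data-sharing} as the only substantive tool. Throughout, the parties' roles are fixed: $p^U$ is the data provider and must therefore supply the attestations of the ``$\rhd\rhd$'' kind. The consumer $q^V$ must supply the attestations of the ``$\rhd\lhd$'' kind. Each party is bound by, and can only produce or verify attestations within, the trust framework of its own data space.

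\textbf{First inclusion, $T^U_{\rhd\rhd} \subseteq T^V_{\rhd\rhd}$.} The provider $p^U$ is a participant of $U$. By Proposition~\ref{th-nec-cond-data-sharing} applied inside $U$, $p^U$ is in a position to share data only when the full conjunction of its provider-side assertions $A^U_{\rhd\rhd}$ holds. Hence $p^U$ necessarily brings the whole set $T^U_{\rhd\rhd}$ of trust propositions into the transaction. For $\Diamond(p^U \rhd q^V)$, the consumer $q^V$ must accept each of these presented propositions. However, $q^V$ evaluates the provider exclusively under the trust framework of $V$, so the only provider-side propositions it can recognise are those in $T^V_{\rhd\rhd}$.

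Suppose some $t \in T^U_{\rhd\rhd}$ with $t \notin T^V_{\rhd\rhd}$. Then $q^V$ has no rule in $R^V_{\rhd\rhd}$ that $t$ attests, so $t$ cannot be validated on the $V$ side. By the consistency requirement~(\ref{eq-a-consistency}) and the definition of $T_\rhd$, such a $t$ simply does not occur among the premises $q^V$ can check. The conjunction required of the transaction therefore cannot be satisfied jointly on both sides. The ``$\Leftarrow$'' direction of~(\ref{eq-nec-cond-data-sharing}) then yields no possibility, giving $\neg\Diamond(p^U \rhd q^V)$. Contraposition gives the first inclusion.

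\textbf{Second inclusion, $T^V_{\rhd\lhd} \subseteq T^U_{\rhd\lhd}$.} This is the mirror argument with roles exchanged. The consumer $q^V$, being bound by $V$, presents exactly the propositions $T^V_{\rhd\lhd}$. The provider $p^U$ checks them only against its own framework, i.e.\ against $T^U_{\rhd\lhd}$. A proposition in $T^V_{\rhd\lhd} \setminus T^U_{\rhd\lhd}$ cannot be verified by $p^U$, which again defeats $\Diamond(p^U \rhd q^V)$. Combining both contrapositions with $\land$ gives the lemma as stated.

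\textbf{Main obstacle.} The hard step is justifying the modelling principle that ``every proposition a party presents must be recognised by the counterparty's framework'' rather than the weaker ``every proposition the counterparty demands must be presented''. The latter would give the reverse inclusions. I would make this explicit as the assumption that a party cannot selectively present only part of its home framework's conjunction in~(\ref{eq-nec-cond-data-sharing}). Each presented attestation must then be accepted by the other side, since an unverifiable assertion in the conjunction renders the whole conjunction unverifiable for that side. This is precisely where the biconditional in Proposition~\ref{th-nec-cond-data-sharing}, with its conjunction over \emph{all} $a_i \in A_\rhd$, is used. I would state this reading openly at the start of the proof, since the remaining steps are routine element-chasing.
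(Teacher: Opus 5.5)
Your argument is essentially the paper's. You split by role, with the provider side giving $T^U_{\rhd\rhd} \subseteq T^V_{\rhd\rhd}$ and the consumer side giving $T^V_{\rhd\lhd} \subseteq T^U_{\rhd\lhd}$, and you rely on the reading the paper uses tacitly (``$q^V$ only accepts $T^V_{\rhd\rhd}$, $p^U$ can only supply $T^U_{\rhd\rhd}$''): everything one side presents must be accepted by the other, and as you note, this reading rather than ``everything demanded must be supplied'' is what fixes the direction of the inclusions.

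One small correction: passing from the failed conjunction to $\neg\Diamond(p^U \rhd q^V)$ needs the contrapositive of the $\Rightarrow$ direction of (\ref{eq-nec-cond-data-sharing}), not the $\Leftarrow$ direction. Also, the appeal to (\ref{eq-a-consistency}) does no work at that step.
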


\begin{proof}
The data consumer $q^V$ needs assertions $A_{\rhd\rhd}^V$ to hold to accept data from any provider. As $q^V$ only accepts trust propositions
$t \in T_{\rhd\rhd}^V$ and data provider $p^U$ can only supply trust propositions $t \in T_{\rhd\rhd}^U$, this can only be fulfilled in case
$T_{\rhd\rhd}^U \subseteq T_{\rhd\rhd}^V$.

The data provider $p^U$ needs assertions $A_{\rhd\lhd}^U$, but only accepts trust propositions from $T_{\rhd\lhd}^U$.
As the data consumer can only provide trust propositions out of $T_{\rhd\lhd}^V$, this can only be fulfilled in case 
$T_{\rhd\lhd}^V \subseteq T_{\rhd\lhd}^U$.
\end{proof}

Let us now turn to the proof of proposition \ref{th-dsp-interop}.

\begin{proof}
\textbf{Direction $\Rightarrow$} \\
From equ.~(\ref{def-dsp-interop}) we immediately get
\begin{equation}				\label{eq-intermed-1}
	\Diamond(p^U \rhd q^V) \land \Diamond(r^V \rhd s^U) \quad \forall p^U, s^U \in O^U; \forall q^V, r^U \in O^V.
\end{equation}
Applying lemma \ref{lem-2} to the first element of the connective in (\ref{eq-intermed-1}) yields 
\begin{equation*}
	T_{\rhd\rhd}^U \subseteq T_{\rhd\rhd}^V \; \land \; T_{\rhd\lhd}^V \subseteq T_{\rhd\lhd}^U.
\end{equation*}
Applying the lemma to the second element (or part) of the connective yields
\begin{equation*}
	T_{\rhd\rhd}^V \subseteq T_{\rhd\rhd}^U \; \land \; T_{\rhd\lhd}^U \subseteq T_{\rhd\lhd}^V.
\end{equation*}
Combining the left-hand and right hand sides of the preceding two statements, we immediately have $T_{\rhd\rhd}^U = T_{\rhd\rhd}^V$ and $T_{\rhd\lhd}^V \subseteq T_{\rhd\lhd}^U$. Using $T_{\rhd\rhd} \cup T_{\rhd\lhd} = T_\rhd$ for any data space, we get $T_\rhd^U = T_\rhd^V$. \\

\textbf{Direction $\Leftarrow$} \\
Obvious, as both, provider and consumer, accept all trust propositions from the other participant in the data transaction.
\end{proof}

There is promise and pain in the result of proposition \ref{th-dsp-interop}: On one hand, data spaces now have a clear prescription how to achieve interoperability between themselves, namely agreeing on policies, rules, and standards. On the other hand, the proposition shows that the extent of cross-data space interoperability is strictly determined by the amount of policies, rules, and standards mutually shared and accepted.

To be technically precise, it is \textbf{not} directly the set of rules and standards but \emph{just} the set of trust service providers and the evidences or proofs they provide which need to be shared (in the sense of being mutually accepted) between different ecosystems (\emph{arg.}: $T_\rhd^U = T_\rhd^V$ in eqn.~(\ref{th-nec-cond-data-sharing}); the rule sets $R^U$ and $R^V$ are not directly involved).

In particular, it means, that \emph{prior general} agreement is not always needed. This may happen in case a data product also includes verifiable proofs which can be independently checked by the receiving participant. In this case, the strategy changes from ``We trust each other because we share the same policies'' to ``We trust this specific data because it comes with a proof we can verify. In that sense, the real limit of interoperability may not just be the size of the shared policy set, it may be the expressiveness of the shared proof system -- [which is] a materially different boundary condition"
\cite{Riley-2026}. \\

In passing, we note the following pleasing property of our notion of data space interoperability.

\begin{lemma}[Data space interoperability as equivalence relation]
Assume $\mD$ to be a suitable universe of data spaces $U$, $V$, $W \in \mD$ as defined above and denote with $\bowtie \; \subseteq \mD \times \mD$ the data space interoperability relation.

Then $\bowtie$ is an equivalence relation.
\end{lemma}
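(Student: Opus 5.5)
The natural route is to go through Proposition~\ref{th-dsp-interop}, which characterizes $\bowtie$ by the condition $T_\rhd^U = T_\rhd^V$, instead of working with the modal definition~(\ref{eq-dsp-interop}) directly. Once we have
\begin{equation*}
	U \bowtie V \ \Leftrightarrow \ T_\rhd^U = T_\rhd^V \quad \text{for all } U, V \in \mD,
\end{equation*}
the relation $\bowtie$ is the kernel of the map $\tau : \mD \to \wp(\mT)$, $U \mapsto T_\rhd^U$. The kernel of any map, $\{(U,V) \mid \tau(U) = \tau(V)\}$, is an equivalence relation, because equality of sets is reflexive, symmetric and transitive. The plan is to state this observation and then verify the three laws explicitly, in the style of the proof of Proposition~\ref{prop-cred-equiv-1}.

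\textbf{Reflexivity.} We have $T_\rhd^U = T_\rhd^U$ trivially, so the direction ``$\Leftarrow$'' of Proposition~\ref{th-dsp-interop} gives $U \bowtie U$. As a sanity check, this also agrees with the intended reading of~(\ref{eq-dsp-interop}) with $V = U$: it says that any two participants of the same data space can share data. That is exactly Proposition~\ref{th-nec-cond-data-sharing} applied within $U$, provided the assertions $A_\rhd$ can be satisfied.

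\textbf{Symmetry.} Definition~\ref{def-dsp-interop} is already symmetric in $U$ and $V$: swapping the two data spaces swaps the two conjuncts $\Diamond(p^U \rhd q^V)$ and $\Diamond(r^V \rhd s^U)$, and $\land$ is commutative. So symmetry follows without even invoking the proposition.

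\textbf{Transitivity.} Suppose $U \bowtie V$ and $V \bowtie W$. Direction ``$\Rightarrow$'' of Proposition~\ref{th-dsp-interop} gives $T_\rhd^U = T_\rhd^V$ and $T_\rhd^V = T_\rhd^W$. Transitivity of set equality then gives $T_\rhd^U = T_\rhd^W$, and direction ``$\Leftarrow$'' yields $U \bowtie W$.

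The main obstacle is that the lemma relies on both directions of Proposition~\ref{th-dsp-interop}, in particular the informally argued ``$\Leftarrow$'' direction. I would assume that proposition as given, as permitted. Transitivity is the only step with content, and it is exactly where the set-equality characterization is needed: working directly with the modal definition would require showing that the possibility of sharing between $U$ and $V$ and between $V$ and $W$ composes to possibility between $U$ and $W$. The modal framework gives no such composition. I would therefore add a remark that the result is only as strong as that characterization.
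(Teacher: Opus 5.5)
Your proof is correct and follows the paper's route. The paper likewise calls reflexivity and symmetry trivial, and it obtains transitivity by passing through Proposition~\ref{th-dsp-interop} to $T_\rhd^U = T_\rhd^V = T_\rhd^W$ and back to $U \bowtie W$; your version only makes explicit that both directions of that proposition are used, so the lemma inherits any weakness of its informally argued ``$\Leftarrow$'' direction.
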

\begin{proof}
Reflexivity, $U \bowtie U$ is trivial, as well as symmetry, $U \bowtie V \Leftrightarrow V \bowtie U$.
For transitiveness, we easily observe that from $U \bowtie V \land V \bowtie W$ we get $T_\rhd^U = T_\rhd^V = T_\rhd^W$ resulting in $U \bowtie W$.
\end{proof}

This is encapsulating the fact that if all data spaces agree on a set of common trust propositions, they are all equally interoperable to the extent their \emph{individual} rules and policies solely depend on verification via this common ``trust layer''.

\section{Outlook}

We are currently working on extending the ecosystem trust profile with concepts from the latest (25.11) GaiaX Identity, Credential and Access Management Document \cite{Gaia-X-2025c}, most importantly incorporating the much more expanded ontology of a \texttt{Trust Scope Credential}.

\section*{Acknowledgements}

The author greatly acknowledges the overall  conception and higher level architecture of the IMX federated trust use case by Klaus Ottradovetz and the numerous discussions with him and others of the team on turning this vision into reality.

Furthermore, we extend our gratitude to Christoph Lange-Bever, co-lead of the Gaia-X Service Characteristics Working Group who supported the ``coming into life'' of the very first version of the Gaia-X Meta-Registry ontology and to Pietro Bartoccioni, co-lead of the Gaia-X Identity, Credentials, and Access Management (ICAM) Working Group, who provided very convincing and concrete ideas on certain elements of our Ecosystem Trust Profile \emph{avant la lettre}.

This work would not have been possible without the continuous and active support of the whole CTO team and the Gaia-X Lab team in particular, led by Yassir Sellami, who listened to and always gave constructive feedback to all the different embryonic versions of these concepts. Additional kudos to Ryan Reychico who checked the very first instance of the Ecosystem Trust Profile ontology in \texttt{LinkML}. Besides her general technical comments, Delphine Claerhout created the IMX extension in our Gaia-X 3.0 ``Danube'' architecture and an initial version of an underlying IMX ontology. 

Final praise, again, goes to Yassir Sellami for adapting our sophisticated GitLab CI/CD for ontology merge requests and the subsequent fully automated publication pipeline.

% ------------------------------------------------------------------------------------------------
% bibliography
%

\bibliographystyle{ieeetr}

\bibliography{Ecosystems_Literature}

\end{document}